\documentclass[conference,letterpaper]{IEEEtran}
\makeatother
\pagestyle{empty}
 \pdfoutput=1

\usepackage{booktabs} 
\usepackage{color}

\usepackage{multirow}
\usepackage{array}
\usepackage[tight,footnotesize]{subfigure}
\usepackage[pdftex]{graphicx}
\usepackage[bookmarks=false]{hyperref}
\usepackage{amsfonts}
\usepackage{amsthm}

\newcommand{\spdz}{SPDZ}

\newcommand{\di}{\displaystyle}
\newcommand{\mod}{~\textnormal{mod}~}

\newcommand{\acronym}{SEMBA}
\newtheorem{theorem}{Theorem}

\begin{document}
\title{\acronym: SEcure Multi-Biometric Authentication}

\author{\IEEEauthorblockN{Giulia Droandi, Tommaso Pignata\\ and Mauro Barni}
	\IEEEauthorblockA{Department of Information Engineering and Mathematics\\
		University of Siena, Siena, Italy\\
		Email: \{giulia.droandi,pignata.tommaso\}@gmail.com\\
	barni@diism.unisi.it}
	\and
	\IEEEauthorblockN{Riccardo Lazzeretti}
	\IEEEauthorblockA{Department of Computer, Control, and \\
		Management Engineering "Antonio Ruberti"\\ Sapienza University of Rome, Rome, Italy.\\
		Email: lazzeretti@diag.uniroma1.it}
}

\maketitle

\begin{abstract}
Biometrics security is a dynamic research area spurred by the need to protect personal traits from threats
 like theft, non-authorised  distribution, reuse and so on.
  A widely investigated solution to such threats consists in processing the biometric signals 
  under encryption, to avoid any leakage of information towards non-authorised parties.
In this paper, we propose to leverage on the superior performance of multimodal biometric recognition
 to improve the efficiency of a biometric-based authentication protocol operating on encrypted data under
  the malicious security model. In the proposed protocol, authentication relies on both facial and iris biometrics, 
  whose representation accuracy is specifically tailored to trade-off between recognition accuracy and
   efficiency. From a cryptographic point of view, the protocol relies on SPDZ \cite{damgaard2012multiparty, damgaard2013practical}. 
Experimental results show that the multimodal protocol is faster than  corresponding unimodal protocols achieving the same accuracy. 
\end{abstract}


\section{Introduction }

In the digital and increasingly interconnected world we live in, establishing individuals' identity is a pressing need. 
For this reason, in the last decades, we have seen an increasing interest in biometric-based recognition systems. 
Biometric recognition can be split into two main categories: authentication and identification.
In the first scenario, also referred to as \emph{verification},
the user is interested in demonstrating that he/she is who he/she claims to
be, while in the second one, the goal is to determine the identity of the user submitting the biometric
template among those \emph{known by the system}.  
%
%
Usually, in both verification and identification protocols, a single biometric trait
is used to extract a feature vector. The feature vector then is matched with one or 
more templates stored in the system database. 
In this work we focus on an authentication protocol.

More recently, the security of biometric systems has become a very active research area, due to the necessity of impeding newly emerging cybercrimes like identity theft, privacy violation, unauthorized access to sensitive information and so on \cite{SPM_rivista}. 
Protocols allowing to process encrypted biometric signals without decrypting them are among the most widely studied solutions to enhance the security of biometric systems \cite{bringer2013privacy,barni2015privacy}. 
According to such an approach, verification or identification is carried out by the system by relying only on encrypted biometric templates, thus avoiding the risk that sensitive information is leaked during the protocol.

The possibility of processing a comparing encrypted biometric templates relies on a number of cryptographic tools \cite{michael1981rabin,Yao86,paillier1999public,GH11}, broadly referred to as Multi-Party Computation (MPC) \cite{goldreich1998secure}. 
%
Generally speaking, MPC protocols can be classified according to the adopted security model. 
The most common distinction considers protocols which are secure only against \emph{semi-honest}  adversaries, and those which can be proven to be secure also against \emph{malicious adversaries}.
To be specific, according to the \emph{semi-honest} model, all the parties execute the 
protocol without deviating from it, but meanwhile they try to obtain as much
information as possible about the other parties' data. %
Protocols developed in the semi-honest model are very efficient and, for this reason, are adopted in the majority of the works proposed so far \cite{bringer2013privacy,barni2015privacy}.
%
On the contrary, in the \emph{malicious} model, the parties can arbitrarily deviate from 
the protocol in their attempt to get access to sensitive information. %
While security against malicious parties is desirable, in many real world applications, the resulting protocols have a very high complexity and their use in real systems is often impractical.
The great majority of the attempts made so far to devise efficient biometric recognition protocols in the malicious setting, focused on the development and use of innovative and efficient MPC and cryptographic primitives.
A much less investigated approach consists in the adoption of biometric recognition protocols which are better suited to be implemented in a MPC framework. Yet, as highlighted in \cite{barni2015privacy}, working on the signal processing side of the problem may help to reduce significantly the complexity of the resulting MPC  protocol, e.g. by efficiently trading off between accuracy and complexity. %

\subsection{Contribution}
In this work, we follow the above strategy and present \acronym: a SEcure Multi-Biometric Authentication protocol which achieves a better trade-off between efficiency and accuracy with respect to the single modality subsystems composing it. 
This represents a major departure from most works on multimodal biometric systems, in which the availability of multiple biometric modalities is exploited to decrease interclass variability and improve intra-class similarity in the presence of acquisition noise and any other kind of distortion \cite{ross2008introduction
}.
In this framework, the main contributions of the paper are the following:
\textit{i)} we design a multimodal biometric system that combines face and iris templates and that can be easily implemented by relying on Secure Multiparty Computation protocols;
\textit{ii)} we propose a privacy preserving multi-biometric authentication protocol secure against a malicious party. \acronym~ is based on the \spdz~tool \cite{damgaard2012multiparty, damgaard2013practical} and discloses only the final binary decision;
\textit{iii)} we compare our multi-biometric protocol with single biometric protocols, showing that by using a properly simplified representation of the two biometric traits, backed by a rigorous signal processing analysis, the multimodal protocol can reach the same accuracy of the corresponding single-modality systems based on more accurate - and more complicated - representations of iris and face templates, but with significantly lower computational complexity. In particular, \acronym~obtains the same accuracy of the stand
alone iris authentication protocol described in \cite{masek2003recognition}. 
Of course, system designers could also decide to exploit the superior performance allowed by
multimodal authentication to improve authentication accuracy with the same complexity of the single modality protocols.

\section{Prior works}
\label{sec_stateArt}
In the last years, many cryptographic tools, 
including oblivious transfer \cite{michael1981rabin}, homomorphic encryption 
\cite{paillier1999public, pisa2012somewhat}, and garbled circuits \cite{
Yao86}, have been used for privacy protection of biometrics templates. 
In most works, such tools are used in such a way to achieve security in the semi-honest model. 
Many privacy preserving authentication protocols have been proposed in the literature making use of a wide variety of biometric traits.
Since, in this work, we present a privacy preserving multibiometric authentication protocol based on face and iris, we focus on the state of art relative to those  biometries, then we discuss the few works achieving privacy protection in the malicious model and finally we discuss the main characteristics of multimodal (or \emph{fusion}) biometric systems.

\subsection{Biometric recognition in the semi-honest model}

As pointed in \cite{bringer2013privacy,barni2015privacy}, many prior works on biometric recognition are designed to be secure against semi-honest 
adversaries.
Some examples are \cite{erkin2009privacy,sadeghi2010efficient} for face recognition 
and
\cite{luo2012efficient,bringer2012faster,blanton2011secure} for iris recognition. 
In order to guarantee security, 
the above papers are based on many cryptographic techniques 
such as Pailler additive homomorphic cryptosystem (HE), Oblivious Transfer (OT) or  Garbled 
Circuit (GC). 
For efficiency reasons, such protocol are mainly based on the eigenface \cite{turk1991face} and iriscode \cite{daugman2004} representation of iris and face respectively. More accurate protocols exist, but their privacy preserving implementation has such a high complexity to make them impractical.

\subsection{Biometric recognition in malicious setting }\label{sec_multiChiaro}
There are few works on privacy preserving biometric authentication secure under a malicious model.
Kantarcioglu and  Kardes \cite{kantarcioglu2008privacy}  present a way to implement some 
primitives, specifically the dot product and equality check, in the malicious model, by also analyzing the corresponding computational 
cost. Even if this work is not directly related to biometrics, the proposed solutions can be adapted to them. 
In \cite{abidin2016privacy}, Abidin presents a general framework for biometric authentication that uses a homomorphic encryption scheme to evaluate the distance between two encrypted  biometric templates. 
In his work, Abidin proves security against malicious attacks, but does not provide results about the practical implementation of the protocol. 
In \cite{pathak2013privacy}, Pathak and  Raj present two speech-based authentication protocols.
The first one is an interactive protocol based on Pailler cryptosystem which is secure against a semi-honest adversary, 
the second one is a non-interactive protocol based on BNG \cite{boneh2005evaluating} cryptosystem, 
which allows to perform an arbitrary number of additions and one multiplication between ciphertexts, 
and is secure against malicious attacks. In both cases the output is a probability value and the client checks if such a value is equal to zero or not in the plain domain. From the tests and the analysis reported in \cite{pathak2013privacy} it is clear that the interactive protocol is more efficient than the malicious one.

A large number of approaches \cite{kiraz2006protocol,lindell2008implementing,pinkas2009secure,
	nielsen2012new,lindell2016fast} have been proposed to make Yao's garbled circuit techniques secure in the malicious model through Zero Knowledge proof, cut and choose, or other techniques. Such approaches can also be used for biometric authentication protocols, but their complexity is so high to make them impractical.
Gasti et al. \cite{gasti2016secure} propose a lightweight biometric authentication protocol based on simple garbled circuits and secure against malicious adversaries by relying on an untrusted third party (the cloud). The goal of the protocol is to minimize the amount of computation performed by the biometric owner's device (a smartphone), while also reducing the protocol execution time and without the necessity to rely on cut-and-choose techniques.
In the protocol, the biometric owner acts as circuit constructor, the cloud as circuit evaluator, while the server verifies the correctness of the circuit.
The approach is secure against colluding biometric owner and cloud, but not against colluding server and cloud.

\subsection{Multimodal biometric recognition}
{Given the recent technological advances, novel devices are often equipped with numerous sensors, opening the way to multi-biometric authentication. 
 In  \cite{ross2006handbook} Ross, Nandakumar, and Jain 
present an overview of the possible fusion scenarios and their applications in real life.
For our protocol we choose a \textit{multimodal} system, that
combines information from face and iris.}

Biometric signals are usually processed in four stages.
First a sensor captures the traits of an individual as a raw 
biometric data. Second, raw data are 
processed and a compact representation of the physical 
traits, called \emph{features},
is extracted.
Then the feature template is matched with the
templates stored in a database. Finally the matching score is used to 
determine an identity or to validate a claimed identity.
Information can be merged at any time during a multi-biometric recognition protocol.
{For further information about this topic readers may refer to \cite{ross2006handbook}.
The choice of fusion depends on the intended application, its specific 
characteristics and the  multiparty computation tool chosen to guarantee privacy.  
Fusing the biometric signals at an early stage results in a higher accuracy of the protocol at the expenses of a higher complexity.
For this reason, the most used approach, and the one we use in this paper, is 
score level fusion, whereby the match scores from each biometric trait involved in the process are combined to obtain the final result. Score level fusion combines good accuracy and relatively easy implementation.}  

To the best of our knowledge, Gomez-Barrero et al.
\cite{gomez2017multi} have proposed the only previous work on multibiometric privacy protection operating in the encrypted domain.
In their work, the authors present a general framework for multi-biometric 
template protection based on Pailler cryptosystem, in which only encrypted data is handled. 
The authors examine the outcome of the fusion of on-line signature 
and fingerprints, at three different levels of fusion: feature, score and decision levels. 
 The  system presented by Gomez et al. has a low computational cost
 (only one decryption on the server side and no encryptions at verification time), 
moreover they  obtained a good accuracy (EER = 0.12\%), with a required time for a single comparison of about $5\cdot 10^{-4}\textrm{s}$. A drawback with the system described in \cite{gomez2017multi}, is that comparison is carried out on plain data by the server, thus introducing a breach into the security of the system.
On the contrary, in our work we implement also the final comparison step within the \spdz$\,$ framework, to prevent any security loss, even if this choice has a non-negligible cost in terms of complexity (see Section \ref{sec_test_enc}). 
As a further difference, in \cite{gomez2017multi} an Euclidean distance computation (in case of two-modal system) requires $M\cdot F +2$ exponentiations, where $M$ is the number of enrolled samples and $F$ the feature's total number considering all the modalities. 
 In our work, instead, thanks to the \spdz$\,$ system and to the possibility of using integer numbers, we need only $k$ (the length of the feature vector) squares, one of our most expensive operations.

\section{Tools}\label{sec_tools}
In this section, we present the cryptographic  and biometrical tools used in our 
protocol.
\subsection{Cryptographic tools: SPDZ system} 
\label{sec_spdz}
Damg\aa rd et al. \cite{damgaard2012multiparty,damgaard2013practical} proposed the MPC framework named \spdz, a two - or multi-party - computation protocol secure against an active adversary corrupting up to $n-1$ of the $n$ players.  
%
This method 
uses multiplicative triples generated offline by using Somewhat Homomorphic Encryption (SHE) to efficiently perform online secret sharing operations.

We assume  the computation is performed 
over a fixed finite field $\mathbb{F}_p$ of characteristic $p$; where $p$ is a prime number. 
Each player $P_i$ 
has an uniform share $\alpha_i \in \mathbb{F}_p$ of a secret key $\alpha$ such 
that
$\alpha=\sum_{i=1}^n\alpha_i ~\textrm{mod}~ p$ (in the following we omit the indication of the modulus operation for simplicity). In this paper we focus on secure two-party computation protocols, then $n=2$ and $\alpha=\alpha_1 + 
\alpha_2$.   
An item $a \in \mathbb{F}_p$ is $\langle \cdot \rangle$-shared 
if the player $P_i$ holds a tuple $\langle a_i, \gamma(a)_i\rangle$ such that $a = a_1+a_2$ and
 $\gamma(a)=\gamma(a)_1+\gamma(a)_2$. In other words, $a_i$ and $ \gamma(a)_i$ are additive 
 secret shares of $a$ and $\gamma(a)$.
 The value $\gamma(a)$  represents the Message Authentication Code (MAC) of $a$.
Any operation involving some variables is also performed on their MAC, so that, at
 the end of the protocol, the MAC 
 is checked before revealing the outcome. 
If one of the parties has a different MAC from the others, the procedure aborts.
 During the description of the protocol, we say that a $\langle \cdot \rangle$ - shared 
 value is \emph{partially opened} if each party reveals to the other one the value $a_i$ 
 but not the associated $\gamma(a)_i$.

 An SPDZ protocol can be divided into two major phases. The preprocessing phase, sometimes referred to as the offline phase, where the 
 system is set up, and the online phase, where the actual computation is 
 performed. 
%
 {In the offline phase,  parties generate a public key and a shared secret key for the SHE scheme.
Then, relying on the homomorphic properties of the SHE, the pre-processing protocol generates $\alpha$ and $\alpha$'s shares,  input shares, shares of tuples for multiplications and squares, and the random share values necessary to evaluate the comparison
  \cite{damgaard2013practical}. Finally each party decrypts his set of pre-processed data by using his secret key share.
  In this paper, we assume that the generation of tuples and inputs has been already  made in the encrypted domain before the protocol starts} and we focus our efforts on the analysis of the online part of the system.

  By using \spdz, linear operations, such 
 as additions and scalar multiplications (see \autoref{tab_linear_operations}), can be performed on the $\langle \cdot \rangle$-shares 
 without interaction; while products between ciphertexts and comparisons need data transmission and proper 
 sub-protocols. 
 \begin{table}[!t]
     \centering
     \renewcommand{\arraystretch}{1.3}
     \caption{Linear operation in \spdz. With $\langle a \rangle$ we indicate 
     the pair $\langle a, \gamma(a)\rangle$, $\langle a \rangle_i$ indicates the 
     pair $\langle a_i, \gamma(a)_i\rangle$. }
     \label{tab_linear_operations}
     \centering
     \begin{tabular}{ccc}
         operation & party 1 & party 2 \\\hline
          $\langle a \rangle +\langle b \rangle$ &  $\langle a \rangle_1 +\langle b \rangle_1$&$\langle a \rangle_2 +\langle b \rangle_2$\\
          $\langle a \rangle -\langle b \rangle$ &  $\langle a \rangle_1 -\langle b \rangle_1$&$\langle a \rangle_2 -\langle b \rangle_2$\\
          $\alpha\cdot\langle a \rangle$ &  $\alpha\cdot\langle a \rangle_1$&  $\alpha\cdot\langle a \rangle_2$\\
           $c + \langle a \rangle$ &  $c+ \langle a \rangle_1$&  $\langle a \rangle_2$\\
          
     \end{tabular}
     \vspace*{-0.4cm}
 \end{table}
 Products and square operations are evaluated through interactive protocols that use multiplication triples generated during the preprocessing phase. Due to lack of space we refer to \cite{damgaard2012multiparty,damgaard2013practical} for implementation details.
%
%
%
%
   Each multiplication requires two transmissions from each party to the other, while each square operation requires only one transmission.

\subsubsection{Comparison}\label{spdz_comp}
Here we show how to compute the outcome of a secure comparison $x < y$, for any two elements
$x,y \in \mathbb{F}_p$, according to the protocol proposed in 
\cite{veugen2015framework}, that has the lowest computational complexity among all the secure comparison protocols proposed so far. 

The comparison computation 
is based on the observation that $\langle x< y\rangle$  
is determined by the truth values of $\langle x< \frac{p}{2}\rangle$, $\langle y< \frac{p}{2}\rangle$,  and $\langle(x-y) \mod p <\frac{p}{2}\rangle$,
where $\langle x < y \rangle$ indicates the
share values of the outcome of $x<y$.
By choosing $p$ so large that both inputs are lower than 
$\frac{p}{2}$,
it is sufficient to evaluate only $\langle(x-y) \mod p <\frac{p}{2}\rangle$.

Given $z = x-y$,   then $\langle x < y \rangle$ can be easily 
computed as $1- \langle z < \frac{p}{2} \rangle$.
We observe  that if $z > \frac{p}{2}$ then $2z > p$. Since we 
work on $ \mathbb{F}_p$, we have that 
$ 2z \mod p = 2z-p$ and it is 
odd; else if $z < \frac{p}{2}$ than
$2z <p$ and it will be even because we do not need any modular operation. 
Therefore, to establish if $z$ is larger or  smaller than $\frac{p}{2}$ 
we need to determine only the last significant bit of $2z$.  
To compute the last significant bit of $2z$, we use a value $\langle r
\rangle$ shared by parties both as integer and as a bit array. The value $r$ 
along with  its bit decomposition are pre-computed off line.
We indicate as 
$r_0 r_1\ldots r_\ell$ the bits of $r$ and with $\langle r_i \rangle $ their
shared values.

First of all we compute $\langle s \rangle = \langle 2z + r \rangle$, then $s$ 
is partially opened. If $s< p $ then the last significant bit of $2z $ is 
equal to  $s_0 \oplus  r_0$, otherwise it is equal to $ 1- (s_0 \oplus r_0)$.

Since we work in the field $\mathbb{F}_p$, $s<p$ iff $s<r$. 
By recalling that $s$ is known to 
both parties, we can easily obtain  a $\langle \cdot \rangle $-share 
of $\delta$, the truth value of $ \langle s < r \rangle$ (i.e. 
$\langle \delta \rangle = \langle s < r \rangle$) working on the bits of $s$ and on the shared bits of $r$.
Then we use the following procedure to calculate  $\langle \delta \rangle = \langle s < r 
\rangle$.
\begin{itemize}[nosep,noitemsep]
	\item[]If $s_0=0$ then  $\langle \delta \rangle = \langle r_0\rangle$ else  $\langle\delta\rangle=\langle 0 
\rangle$.
	\item[]For all $i < \ell-1$ 
	\item[]\qquad if $s_i = 0$ then  $\langle \delta \rangle = \langle  r_i \rangle+ \langle \delta \rangle \cdot \langle 1-r_i \rangle$ 
	\item []\qquad else  $\langle \delta \rangle = \langle  r_i \rangle \cdot \langle \delta 
	\rangle$.
\end{itemize}

Now  $\langle z<\frac{p}{2} \rangle $ can be easily calculated as
\begin{equation}\label{eq_comp}
\langle \delta \oplus s_0 \oplus r_0\rangle= \langle \delta \rangle -\langle s_0 \oplus r_0\rangle -\langle \delta \rangle\cdot \langle s_0 \oplus r_0 \rangle
\end{equation}
Since $s_0$ is known in our implementation,
{\small
\begin{equation}
	\langle \delta \oplus s_0 \oplus r_0\rangle=\left\{
	\begin{array}{ll}
	\langle \delta\rangle +\langle r_0\rangle -2\cdot\langle \delta\rangle\cdot\langle r_0\rangle &\textnormal{if }s_0=0,\\
	1+2\langle\delta\rangle\cdot\langle r_0\rangle-\langle r_0\rangle -\langle\delta\rangle & \textnormal{if }s_0=1.
	\end{array}
	\right.
\end{equation}}
%
%

\emph{Complexity}.
This protocol requires 
one multiplication for each iteration  plus one for the last step in (\ref{eq_comp}). Therefore the complexity depends on the bit length 
of $r$ and it is equal to 
$\ell$ multiplications, which require $2\ell$ transmissions.

\subsection{Biometrics tools}
\label{sec_bio}
In our work we compare two different authentication systems: an iris authentication protocol and a multimodal system relying on the fusion at the score level of iriscode and eigenfaces. 
Both protocols are developed  in the encrypted domain by relying on the \spdz~ framework.
Different and more accurate protocols exist, but their privacy preserving implementation have such a high complexity to make them impractical.
 In the following subsections, we first describe the standalone iris (Section \ref{sec_iriscode_descrizione}) and face
 (Section \ref{sec_face_descrizione}) protocols in the plain domain, then the 
 main characteristics of a
 general multimodal recognition protocol (Section \ref{sec_multibio}).
\subsubsection{Iris recognition}
\label{sec_iriscode_descrizione}
In our implementation, we use the iriscode template
proposed for the first time by Daugman in \cite{daugman2004} and then modified 
by Masek in  \cite{masek2003recognition}. 
The description of the entire extraction process is out of the scope of this paper, 
therefore we limit the presentation to the details which are relevant for the current work.

An iriscode is a bit vector of length $N$ depending on the radial $r$ and angular $\theta$
resolutions used during template extraction. 
 The extraction process  outputs also  a bitwise noise mask.  
 The noise mask represents the regions of the iris altered by noise, e.g. by eyelashes end 
 eyelid.
%
 To compare two templates (the query and the probe enrolled in the database) the authentication process 
 relies on a weighted
Hamming distance, where the weights depend on the noise mask bits. In this 
way only significant bits are used to calculate the distance between the two 
templates.
Given  {the template length} $N= 2\cdot r \cdot \theta$,  we indicate by $F_1=f_{1,1}\ldots f_{1,N}$ and $F_2=f_{2,1}\ldots f_{2,N}$ the iris templates and 
$M_1=m_{1,1}\ldots m_{1,N}$ and $M_2=m_{2,1}\ldots m_{2,N}$ the corresponding noise 
masks. {We assume that the value $m_i=1$ in the mask vector indicates that the bit is affected by noise  and must be excluded from the computation. 
Moreover,  we indicate with $\overline{a} = 1-a$ the  negation of a feature bit $a$.  }
The weighted Hamming Distance (HD) can be calculated as:
\begin{eqnarray}\label{eq_hamming_dist}
  HD &=& \di{\frac{\| (F_1 \oplus F_2 )\wedge ( \overline{M}_1\wedge \overline{M}_2)\| }
  	{N-\|M_{1} \vee M_{2}\|}}\nonumber\\ 
  &=& \di{\frac{\sum_{j}^N\left[(f_{1,j} \oplus f_{2,j} )\wedge ( \overline{m}_{1,i}\wedge \overline{m}_{2,i})\right ]}{N-\sum_{j=1}^Nm_{1,j} \vee m_{2,j}}} .
\end{eqnarray}


\subsubsection{Face recognition}
\label{sec_face_descrizione}

{Face templates can be generated by relying on  the \emph{eigenfaces} method proposed by Tuck and Pentland in 
\cite{turk1991face} providing a set of 
facial characteristics that can be used to describe all the faces into the database (the \emph{face-space}),
as eigenvectors do in linear algebra. 
The template associated to a face $\Gamma$, is the projection of $\Gamma$ on the face space 
 $\Omega =[\omega_1\ldots\omega_k]$. Each $\omega_j$ 
describes the contribution of the corresponding eigenface, $e_j$, in representing the input 
image.
}
In order to find the image that best matches with $\Gamma$,  the algorithm looks for the projection vector $\Omega_j$ among all database images, 
 that minimizes the 
Euclidean distance 
\begin{equation}\label{eq_Euclidea}
ED=\| \Omega - \Omega_j\| =\sqrt{\sum_{i=0}^k \left(\Omega_i-\Omega_{j,i}\right)^2 }.
\end{equation}

\subsubsection{Multi-biometric score level fusion}
\label{sec_multibio}
In this work we choose to use multimodal fusion (face and iris) at score 
  level. {This is motivated by the fact that, following the considerations in \autoref{sec_multiChiaro}, face and iris recognition protocols allow to easily compute the match score
  in the encrypted domain. Moreover, we underline  that both iris and face can be acquired at the same time in real applications, for example by using a smartphone camera.} 
   We summarize our  fusion system in \autoref{fig_fusione}.
    \begin{figure}[ht]
      \centering
      \includegraphics[width=\columnwidth]{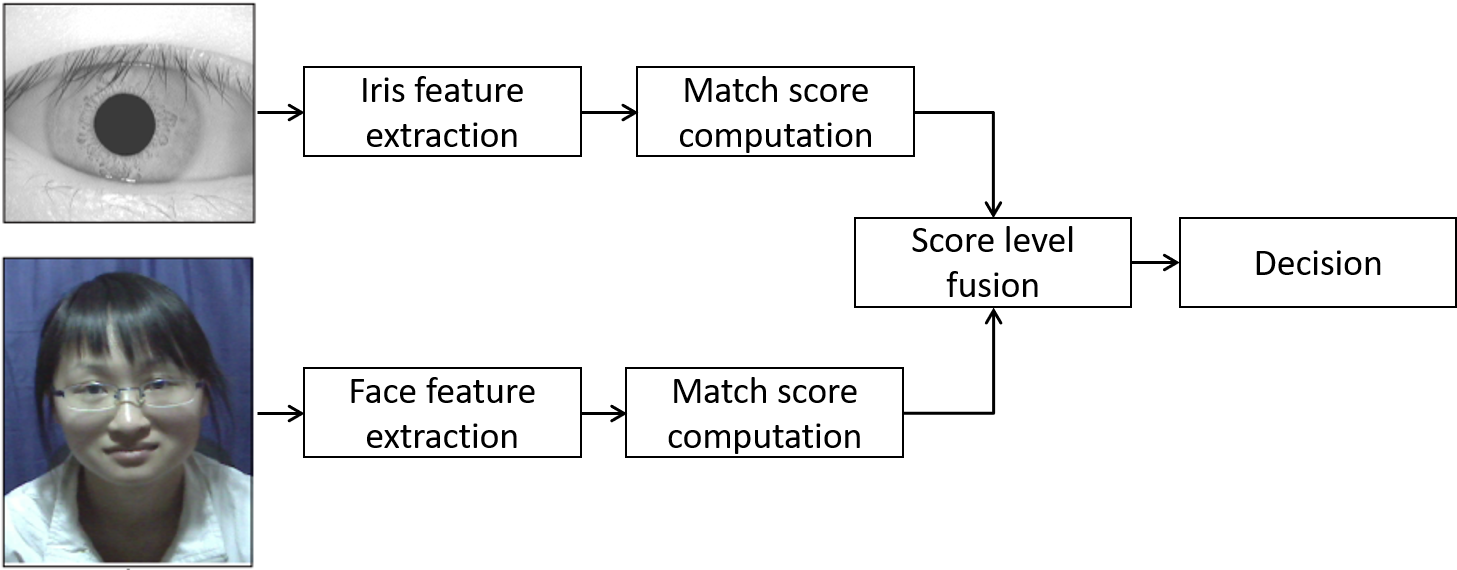}
      \caption{General scheme of our fusion protocol.}
      \label{fig_fusione}
    \end{figure}
    
  Match scores generated by iris and face classifiers are characterised by a different range of values: the output of the iris protocol 
  is a real number in $[0,1]$, while the output of the face recognition protocol is a squared number in $[0, M], M \in \mathbb{R}$.
  Many ways have been proposed to overcome the problems generated by the 
  differences between the scores provided by different biometric recognition systems (see \cite{ross2006handbook} for more details). Among 
  them, due to the characteristics of our MPC system, we choose a linear 
  combination of the scores. Furthermore,
  to normalize the face matching score, we choose to use a min - max normalization method 
  \cite{ross2006handbook}. 
  \begin{equation}
    \mathrm{face\_score\_norm}=\di{\frac{\mathrm{score}-\mathrm{min}_\textit{face}}{\mathrm{max_\textit{face} - min_\textit{face}}}},
  \end{equation}  
  
\noindent where $\mathrm{min}_\textit{face}, \mathrm{max}_\textit{face}$, indicate the minimum and maximum values of 
 the face range. Since $min_\textit{face} =0$, 
  if $i\in [0,1]$ is the Hamming distance resulting from 
  iris match, $f$ the Euclidean distance between faces, and $R$ the maximum
  value of the face recognition system, multimodal recognition corresponds to checking if the following inequality holds: 
  
  \begin{equation}\label{eq_fusionRule}
    \alpha \cdot i +\beta \cdot \di{\frac{f}{R}} < t
  \end{equation}
  
  where $\alpha, \beta \in [0,1]$ are proper weights and $t\in [0,1]$ is the threshold.

The choice of the parameters $\alpha, \beta, T$ determines the trade-off between equal error rate (EER) and computational complexity
 (Section \ref{sec_test_clear}).
As in \cite{connaughton2013fusion,gomez2017multi} we choose $\beta=1-\alpha$ for the tests in the plain domain.
 More details are provided 
     in Section \ref{sec_implements}.

\section{Proposed Protocol} 
\label{sec_implements}
In this section, we describe the details of  \acronym~implementation. We start by presenting the security model (Section \ref{sec_model}), 
then we present the implementation of the iris and face authentication protocols (Sections \ref{sec_iris_Encrypted} and \ref{sec_faceEncry})
 and the multimodal biometric protocol (Section \ref{sec_fusionEncry}).
  Finally, we discuss the security of the \spdz$\,$ protocols in Section \ref{sec_securityDemonstration}.
\subsection{Security model}\label{sec_model}
All the protocols involve a client (the biometric owner) and a server that authenticates the identity of the client. 
Since in our computation we have only two parties, from now on, the \spdz $ \, 
$
protocol is described for $n=2$.
On one hand the client does not want to reveal his biometric templates, on the other hand the server does not want to dislocate its records.
Both client and server can be malicious, i.e. they may be interested in gaining as much information as possible on the other party even by deviating from the protocol. 
Considering that the \spdz$\,$ protocol involving $n$ {{parties}}  is secure up to $n-1$ malicious parties, we assume that only one between the client and the server can act maliciously. 
We underline that if both act maliciously, they obtain no real information about the counterpart.
We also assume that the parties are connected through a secure channel providing privacy against eavesdroppers and any third party that can compromise the transmission.


   \subsection{Iris authentication protocol in encrypted domain}
   \label{sec_iris_Encrypted}
   The \spdz$ \,$ protocol supports operations modulo $p$. Each binary element of an iris feature is encrypted as a modulo $p$ integer
   $ \langle a \rangle$-share.  For this reason, to  implement the Hamming distance computation as in Equation (\ref{eq_hamming_dist}),
   we must implement logical operations $\oplus, \vee, \wedge$ as a combination of integer operations $+, - , \cdot$.
   {The correspondence between binary and integer operations\ is detailed in \autoref{tab_correspondence}}.
   
     \begin{table}[h!]
	\centering
	\renewcommand{\arraystretch}{1.3}
	\caption{Correspondence table between binary and integer operations}
	\label{tab_correspondence}
	\begin{tabular}{cc}
		Binary & Integer\\\hline
		$a\oplus b$&$a + b-2\cdot a\cdot b$\\
		$a \wedge b$&$a \cdot b$\\
		$a \vee b $&$ a +b -a\cdot 
		b$\\
		$\overline{a}$&$1-a$
	\end{tabular}
	
\end{table}

    Let  $F_1=f_{1,1},f_{1,2},\cdots, f_{1,N}$ and $F_2=f_{2,1},f_{2,2},\cdots, f_{2,N}$ be two binary iris feature templates, where $N$ is the number of features, we 
    indicate with $ \langle F_i \rangle$ the  vector containing  shares of each 
    element, i.e. the vector $ \langle f_{i,1 }\rangle, \langle f_{i,2} \rangle, \cdots \langle f_{i,n 
    }
    \rangle$ for $i=1,2$.

       Since $\overline{M}_1\wedge \overline{M}_2$ is equivalent to $\overline{M_1 \vee M_2}$, the Hamming distance in Equation (\ref{eq_hamming_dist}) can be
    computed 
    as:
{\small\begin{equation}\label{hamming_dist_cifrata}
\di{\frac{\sum_{i=1}^N\left \{ (f_{1,i} + f_{2,i} -2\cdot f_{1,i}\cdot f_{2,i})\cdot \left [1- (m_{1,i}\vee m_{2,i})\right ] \right \}}{N-\sum_{j=1}^N (m_{1,i}\vee m_{2,i})}}, 
\end{equation}}
where $m_{1,i}\vee m_{2,i}=m_{1,i}+m_{2,i}-m_{1,i}\cdot m_{2,i}$.

In \spdz, as well as in other MPC protocols, division is a very expensive operation. Hence, instead of evaluating the division and then compare the distance with an acceptance threshold, the denominator is multiplied by the threshold before the comparison. 
By letting 
{\small\begin{equation}\label{eq_numIris}
num = \sum_{i=1}^N\left\{ (f_{1,i} + f_{2,i} -2 f_{1,i}f_{2,i}) \left[1- (m_{1,i}\vee m_{2,i})\right] \right\}
\end{equation}}
and
{\small\begin{equation}\label{eq_demIris}
den = N-\sum_{j=1}^N (m_{1,i}\vee m_{2,i}),
\end{equation}}
the authentication check corresponds to 
 \begin{equation}
\mathit{num} < t \cdot \mathit{den}.
\end{equation}

As it can be seen from Equation (\ref{eq_numIris}), many share multiplications are needed to calculate the numerator, namely, $N$ multiplications for each $F_1\oplus F_2$,  $M_1 \wedge M_2$ and $(F_1 \oplus F_2 )\wedge ( M_1\wedge 
M_2)$. To compute the denominator, we can reuse the $m_{1,i}\vee m_{2,i}$ already computed for the numerator.
 Multiplication between shares requires data transmission, slowing down the computation.
 To optimize the protocol, we split the multiplication protocol into two parts.
First we calculate
$\langle \varepsilon_i\rangle =\langle f_{1,i}\rangle - \langle a\rangle$  
and $\langle \delta_i\rangle =\langle f_{2,i} \rangle - \langle b\rangle$, for all $i=1\ldots 
N$. Then, to partially open the values, both server and client exchange  shares  
by using a  packet for all $\varepsilon$'s and one for all $\delta$'s. In this 
way we need only two transmissions for $N$ multiplications.

\emph{Complexity}.
Computing $F_1 \oplus F_2$ and $M_1\wedge M_2$ requires $N$ multiplications each, 
one for each element of the template; moreover, $N$ multiplications are required to compute $(F_1 \oplus F_2 )\wedge ( M_1\wedge M_2)$.
The total cost associated to the computation of $\mathit{num}$ is $3N$ multiplications but, as we explained above, we need only $6$ transmissions.
Computing $\mathit{den}$ (Equation (\ref{eq_demIris})) has a negligible complexity since $m_{1,i}+m_{2,i}-m_{1,i}\cdot m_{2,i}$ 
has already been calculated for all $i$ in Equation (\ref{eq_numIris}). Moreover, we need a multiplication between $\mathit{den}$ and $t$, and $\ell$ multiplications for the 
comparison, where $\ell$ is the number of bits necessary to represent a modulo $p$ integer (see Section \ref{spdz_comp}). Consequently, we need $3N+\ell+1$ 
multiplications but only $2\ell+7$ transmissions for the iris protocol.

\subsection{Face authentication in encrypted domain}
\label{sec_faceEncry}
{As for the iris, we assume that  face features have already been computed according to the protocol described in Section \ref{sec_face_descrizione}, obtaining a set of $k$ real features $\Omega_i$ that have been rounded to represent them in $\mathbb{F}_p$ (in the next we avoid the round operator for simplicity).
Given the projection $\Omega$ of the query face image, the face-based biometric authentication protocol must evaluate the Euclidean distance  $ED$ 
as in Equation (\ref{eq_Euclidea}), 
{and check if it is lower than a threshold $t$}.}
Considering that the square root cannot be evaluated efficiently in \spdz, we instead compare 
the Squared Euclidean distance (SED) against the squared threshold:
\begin{equation}\label{eq_sed}
\sum_{i=1}^k(\Omega_i-\Omega_{j,i})^2 < t^2.
\end{equation}


{In equation (\ref{eq_sed}), $\Omega_i$ indicates an element of the face $\Omega$ and $\Omega_{j,i}$ the $i$-th element of the projection $\Omega_j$.}
Moreover, as we did for the Hamming distance, we separate the square computation into two parts, so we need only one transmission to calculate SED. 

\emph{Complexity.}
The computation of the Squared Euclidean distance requires the evaluation {of} $k$ squares that can be parallelised, hence only one transmission is necessary. 
 For the comparison 
we need $\ell$ products, as in the iris authentication protocol. Considering that squares and products have similar complexity, the complexity of the protocol is given by $k+\ell$ products.

\subsection{Fusion in encrypted domain}
\label{sec_fusionEncry}
We now describe our solution to implement the fusion protocol in the encrypted 
domain. As outlined in Equation (\ref{eq_fusionRule}), we use a linear 
combination of the matching scores;
to avoid performing divisions, we evaluate
  \begin{equation}\label{eq_fusionRule_enc}
    \alpha \cdot {num} \cdot R+\beta 
    \cdot \textrm{SED}\cdot {den}< T \cdot den \cdot R,
      \end{equation}
 where $num$ and $den$ stand for the numerator and denominator of the iris Hamming distance, $i$ in (\ref{eq_fusionRule}), while $\textrm{SED}$, $R$
  {and $T$} stand for squared Euclidean distance score, face maximum 
range, and threshold.   

   
         The \spdz$\,$ framework does not allow the use of non-integer numbers, so $\alpha$, $\beta$ and $T$
      are scaled and approximated to integers in the interval $[0,10]$. 
      {We chose this interval because it is accurate enough to obtain the same results achieved in the plain domain, and the resulting bitlength is small enough to make it possibile to represent $\alpha \cdot {num} \cdot R+\beta 
      	\cdot \textrm{SED}\cdot {den}$ and $T \cdot {den}\cdot R$ in $\mathbb{Z}_p$.}

      \emph{Complexity} 
      The previous formula requires three multiplications and six transmissions 
      that cannot be run in parallel. Moreover, it needs $\ell$ multiplications for the
      comparison. In total, the linear fusion requires $\ell+6$ multiplications and $2\ell+12$ 
      transmissions, plus the multiplications necessary to compute the Hamming and the squared Euclidean distances.
      The total complexity of the full multimodal protocol is $3N+\ell+6$ 
      multiplications and $k$ squares, while it requires only $2\ell+19$ transmissions.
    
    \begin{table}[ht]
      \renewcommand{\arraystretch}{1}
        \centering
        \caption{Complexity Summary. We underline that transmission number depends only on $p$'s bitlength $\ell$. }\label{tab_complexity}
        \begin{tabular}{cccc}
            & Multiplication & Squares & Transmissions\\\hline
         Iris  &$3N+\ell+1$&0& $2\ell+7$\\
         Face & $\ell$&$k$ &$2\ell+1$\\
        Multimodal&$3N+\ell+6$&$k$&$2\ell+19$
        \end{tabular}
    \end{table}
    
\subsection{Protocol security}\label{sec_securityDemonstration}

Relying on \spdz$\,$ tool, our protocols is secure in the UC model if at least one of the two parties is honest.
In fact, according to \cite{damgaard2013practical}, our \spdz -based protocol is secure against $n-1$
malicious adversaries, where $n=2$ in our two-party computation scenario. 
The offline phase does not depend on the functionality evaluated and its security demonstration against active adversaries in the UC model is provided in \cite{damgaard2013practical}. 
The security demonstration  of the online protocol is provided in the following theorem.

\begin{theorem}
	The online \spdz ~implementation of \acronym~is computationally secure against any static adversary corrupting at most $1$ party if $p$ is exponential in the security parameter.
\end{theorem}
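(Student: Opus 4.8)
The plan is to establish security in the Universal Composability (UC) framework by reducing the whole \acronym~online protocol to the security of the underlying \spdz~building blocks, and then invoking the UC composition theorem. First I would model the online phase in the preprocessing-hybrid model: since we assumed that all multiplication and square triples, the comparison randomness $\langle r \rangle$ together with its bit decomposition, the MAC key shares $\alpha_i$, and the input shares are produced in the offline phase, the parties start with correlated randomness supplied by an ideal functionality $\mathcal{F}_{\mathrm{PREP}}$. I would then define the target ideal functionality $\mathcal{F}_{\acronym}$ that receives the iris and face templates and masks of both parties, internally evaluates the fusion rule of Equation~(\ref{eq_fusionRule_enc}), and outputs to both parties \emph{only} the final binary decision, with the usual allowance that a corrupt party may force an abort.

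Second, I would observe that every step of \acronym~is a composition of exactly the \spdz~primitives whose security is already argued in \cite{damgaard2013practical}: the Hamming-distance numerator and denominator, the squared Euclidean distance, and the linear fusion are arithmetic circuits over $\mathbb{F}_p$ built from local linear operations (Table~\ref{tab_linear_operations}), triple-based multiplications, squares, and the comparison sub-protocol of Section~\ref{spdz_comp}. By the UC composition theorem it therefore suffices to exhibit a simulator for a single execution that calls these primitives, treating each as its ideal reactive functionality. Because $2$-party \spdz~tolerates up to $n-1=1$ corruptions, the simulator handles a statically corrupted client or server symmetrically.

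Third, the core of the argument is the construction of the simulator $\mathcal{S}$ and the proof that the real transcript is indistinguishable from the simulated one. The only messages exchanged on the network are partial openings: the packed values $\varepsilon_i = f_{1,i}-a$ and $\delta_i = f_{2,i}-b$ (and their face and fusion analogues) inside the multiplications, the opened $s = 2z+r$ in each comparison, and the final opening of the decision bit. In each case the mask ($a$, $b$, or $r$) is a fresh uniform element supplied by $\mathcal{F}_{\mathrm{PREP}}$, so each opened value is uniformly distributed in $\mathbb{F}_p$ and independent of the honest party's secrets; hence $\mathcal{S}$ can sample them uniformly without knowing the inputs, and program the final opening to match the bit returned by $\mathcal{F}_{\acronym}$. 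I expect the main obstacle to be the malicious case: a corrupt party may add an offset to the shares or to the MAC shares it opens. Here I would invoke the \spdz~MAC-checking mechanism, showing that any such deviation amounts to injecting an additive error that is detected by the final MAC verification except with probability at most $2/p$, i.e.\ the chance of guessing the secret MAC key $\alpha$, so that the only effect achievable maliciously is an abort.

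Since $p$ is exponential in the security parameter, $2/p$ is negligible, which is precisely where the hypothesis on $p$ is used; collecting this negligible error over the polynomially many openings yields statistical security of the online phase in the $\mathcal{F}_{\mathrm{PREP}}$-hybrid model. Composing with the SHE-based realization of $\mathcal{F}_{\mathrm{PREP}}$ from \cite{damgaard2013practical} then upgrades this to computational security of the full protocol against at most one static corruption, as claimed.
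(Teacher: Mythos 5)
Your proposal follows essentially the same route as the paper's proof: both reduce to the online \spdz~security argument of Damg\aa rd et al., arguing that all partially opened values (masked multiplication/square openings and the comparison opening) are uniform and hence simulatable on top of the ideal authentication functionality, that a malicious deviation survives the MAC check only with probability $2/p$, and that this is negligible when $p$ is exponential in the security parameter --- your write-up is simply a more rigorous, explicitly hybrid-model rendering of the paper's sketch. The only element of the paper you omit is an informal closing remark that recovering the honest party's inputs would require solving the fusion inequality with $N_i+N_f$ (resp.\ $N_i+N_f+3$) unknowns, which is anyway subsumed by your simulation-based argument.
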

\begin{proof}
	\newcommand{\simul}{$\mathcal{S}_{\scriptsize\texttt{ONLINE}}$}
	\newcommand{\fonline}{$\mathcal{F}_{\scriptsize\texttt{ONLINE}}$}
The proof follows the security demonstration of the online \spdz~ protocol in \cite{damgaard2013practical}.
We rely on the simulator \simul$\,$ defined in \cite{damgaard2013practical} to work on top of the ideal multibiometric authentication functionality \fonline, such that the adversary cannot distinguish among the simulator using the real function \fonline$\,$ and the real \spdz-based implementation using multiplication triples generated offline. 
Input values broadcasted by both the simulator and honest players are uniform and it is not possible to distinguish among them.
During execution, interaction with player is performed only during multiplication and squaring where partial opening reveals uniform values for both honest parties and simulator. 
Also MACs have similar distribution in both the protocol and the simulation.
If the protocol does not abort due to a cheat detection, both the real and the simulated runs output the decision bit. 
In the simulation the decision bit is obtained by a correct evaluation of the multi-biometric function on the inputs provided by the player. In real \spdz-based implementation the adversary can cheat in the MAC check with probability $2/p$. 
Hence the probability that the adversary can distinguished the simulated environment from the real one is negligible if $p$ is exponential.  
The adversary is not able to obtain the inputs of the
honest player because if the protocol does not abort, he can
observe only its input, the input shares received by the other
party and the final result. To obtain the original inputs of
the honest party, the adversary should be able to solve the inequality in
(\ref{eq_fusionRule}), which has $N_i + N_f$ unknown variables for the
server and $N_i+N_f +3$ for the client, where $N_i$ and $N_f$ are the
number of features used to represent iris and face respectively.
\end{proof}

\section{System tuning}\label{sec_tuning}
{In  this section, we present the results of the tests performed on plain data. We will use such results to choose the best parameters to build an efficient protocol working in the encrypted domain. }
Tests have been carried out on the  ``CASIA-IrisV1'' database for irises  and ``CASIA-FaceV5 part 1'' database for faces,
 both collected by the Chinese Academy of Sciences'
 Institute of Automation (CASIA) 
 \cite{casiaIris,casiaFace}.  
 
 The CASIA-IrisV1 database for irises \cite{casiaIris} contains $756$ grey-scale eye images 
 with $108$ unique irises (or classes) and $7$ images for each of them. 
As in \cite{masek2003recognition}, we used a subset of the database for the tests, retaining only those images 
{in wich the algorithm has well separated iris region from sclera and pupil.}
The resulting database contains  
 $625$ eye images. 
 
 The CASIA-FaceV5 Databases for faces part 1 \cite{casiaFace} contains $500$ 
 face images of $100$ subjects. The face images are captured using Logitech USB camera in one session.
  All face images are $16$ bit color BMP files and the image resolution is $640 \times 480$ pixels.
 
  We implemented and tested  our \spdz-based iris and multibiometric protocols on a desktop equipped with 8GB RAM 
 processor Intel Core i3 CPU 550 @ 3.20 GHz Quad-Core
  running Ubuntu 14.04 LTS (64 bit) operative system.  We developed the test using C++ language with GMP 
   free library for arbitrary precision arithmetic, operating on signed integers, rational numbers, 
   and floating-point numbers.
   
   To implement the \spdz $\,$ protocol we chose the $46$ bit  prime number $p= 67280421310721$, which is big enough 
   to allow all the needed modular operations and comparisons, and guarantee the security of the protocol. Server and client run on the same computer, and we used a socket
   to simulate the transmission channel.

 \subsection{Parameter optimisation}
 \label{sec_test_clear}
 
We performed tests on plain data, running the authentication protocol on each single biometric and then by fusing eigenfaces 
and iriscodes.

\paragraph{Iris} For testing the iris authentication protocol, we have chosen a 
 radial resolution $r$ ranging from $4$ to $20$ and an angular resolution $\theta$ 
 between 
 $100$ and $200$ 
 (\autoref{Tab_noshift}). As said above,
 we tested the protocol on $625$ eye images and each one has been compared with all the 
 others.
 To perform the tests in the plain domain, we used the Matlab code provided by L. Masek (iris recognition source code 
 \cite{Masekcode}) as part of his work \cite{masek2003recognition}. 
 As we can see from  \autoref{Tab_noshift}, the best accuracy is achieved by letting the angular resolution be equal to $160$ angles and radial resolution equal to 20 corresponding to an iris feature vector of length 6400. 

 To reduce the EER, Masek (\cite{masek2003recognition}) shifts $n$ times 
the iris templates keeping the lowest Hamming distance score.
 In \spdz$\,$ this operation is 
computationally very expensive, so we could not afford it. 

\begin{table}[ht]
	\centering
	\renewcommand{\arraystretch}{1.3}
	\caption{Iriscode EER (\%) without shifting, as a function of different values of $r$ and $\theta$.}
	\label{Tab_noshift}
	\begin{tabular}{ c @{\qquad\quad}cccccc}
		\multirow{ 2}{*}{$r$}    & \multicolumn{6}{c}{Angular Resolution $\theta$}\\
		
		&	100&	120&	140	&160&180&200	\\\hline
		
		4	&8.19&	6.43&	4.37	&3.34&	3.31	&3.10\\
		6&	6.88&	5.05&	3.01&	2.45&	2.71&	3.01\\
		8&	6.13&	4.42&	2.69&	2.19&	2.58&	4.36\\
		10&	6.31&	4.03&	2.64&	2.44&	2.54&	3.92	\\
		12&	5.96&	4.10&	2.56&	2.14&	2.59&	3.71\\
		14&	5.71&	3.85&	2.54&	2.17&	2.58&	3.27\\
		16&	5.49&	3.79&	2.32&	2.13&	2.51&	3.31\\
		18&	5.71&	3.61&	2.46&	2.31&	2.41&	3.18	\\
		20&	5.77&	3.74&	2.20&	2.08&	2.41&	3.13
	\end{tabular}

\end{table}

 \paragraph{Face} We have implemented the eigenface protocol by using the Open Source Computer Video (openCV) library\footnote{\url{http://opencv.org/about.html}} 
 and Matlab. Face images are $640 \times 480$ 
 pixel and
  we transformed them in  $256$ grey level images. 
  The protocol has been tested on $500$ images.
We used the algorithm provided in the openCV library 
to build 
 $k$ eigenfaces with $k=1 \ldots 10$. Each image is thus represented by a projection vector of length $k$. Each projection element is a $16$-bit integer and
the squared Euclidean distance has been calculated by using Matlab. We observed that the use of more than 5 projections does not provide any significant improvement (see \autoref{Tab_Face}). 

\begin{table}[hbt]
	\renewcommand{\arraystretch}{1.3}
	\centering
	\caption{
		Eigenface EER (\%) values, as a function of the number of projections.}
	\label{Tab_Face}
	\begin{tabular}{cccc}
		$k$ & EER (\%)&$k$ & EER (\%)\\\hline
		$1$&$28.77$&$6$&$17.01$\\
		$2$&$17.37$&$7$ &$16.19$\\
		$3$ &$16.62$& $8$&$16.51$\\
		$4$&$16.59$&$9$&$16.38$\\
		$5$ &$16.08$&$10$&$16.09$
	\end{tabular}
\end{table}

\begin{table*}[!htb]
	
	\centering
	\renewcommand{\arraystretch}{1.3}
	\caption{EER of the multimodal biometric authentication protocol. The first three columns show iris's parameters: feature's number (N),
		radial resolution ($r$), and angular resolution $\theta$. Fourth column represents Iris authentication system's EER ($\%$). 
		All the others columns contain  the EER's obtained by fusing an iris template of length $N$ and a face template with  $k\in\{1\ldots 10\}$ eigenfaces. We highlighted in bold the configurations that we have selected for the tests under encryption.
	}
	
	\label{Tab_Fusion_totale}
	\begin{tabular}{ccc|c|cccccccccc}
		\multicolumn{3}{c|}{Iris Parameters}& &\multicolumn{10}{c}{Number of Eigenfaces ($k$)} \\
		N&r&$\theta$&iris&1&2&3&4&5&6&7&8&9&10\\\hline
		6400&20&160&2.08&1.17&\textbf{1.15}&1.02&1.25&1.25&1.24&1.31&1.37&1.4&1.41\\
		5760&16&180&2.51&1.26&\textbf{0.98}&1.01&1.22&1.38&1.36&1.43&1.47&1.49&1.50\\
		5600&20&140&2.20&1.20&1.08&1.19&1.18&1.34&1.28&1.36&1.38&1.39&1.40\\
		4800&20&120&3.74&1.90&1.97&1.65&1.98&2.04&2.15&2.11&2.17&2.2&2.21\\
		3840&12&160&2.14&1.84&1.52&1.50&1.45&1.63&1.74&1.76&1.77&1.78&1.78\\
		3600&10&180&2.54&1.36&1.23&\textbf{0.97}&1.65&1.82&1.99&2.07&2.15&2.17&2.19\\
		3360&12&140&2.56&1.51&1.32&1.38&1.31&1.56&1.61&1.58&1.63&1.67&1.69\\
		2560&8&160&2.19&1.50&1.24&1.19&1.15&1.39&1.49&1.59&1.6&1.61&1.62\\
		2400&6&200&3.01&2.01&1.83&1.89&2.13&2.37&2.56&2.66&2.7&2.72&2.73\\
		2160&6&180&2.71&1.92&1.74&1.82&1.98&2.04&2.09&2.07&2.13&2.16&2.18\\
		1920&6&160&2.45&1.47&1.42&1.43&1.57&1.66&1.95&1.92&1.95&1.96&1.97\\
		1600&4&200&3.10& \textbf{2.01}&\textbf{1.87}&1.85&2.41&2.37&2.56&2.67&2.69&2.71&2.71\\
		1280&4&160&3.34&2.29&1.89&2.22&2.26&2.51&2.80&2.80&2.92&2.98&3.01
	\end{tabular}

\end{table*}

\paragraph{Multimodal}
We have evaluated the efficiency of the fusion protocol in the plain domain, by fusing the outcomes of face and iris sub-algorithms through a  Matlab implementation. 
From \autoref{Tab_noshift}, we have chosen some relevant iris configurations, based on the achieved EER or number of features. First of all, to better compare with  the best iris result,
we chose $r=20$ and $\theta=160$ resulting in $N=6400$, then for each $\theta$,  we looked for the best accuracy under $4\%$,
and finally we chose those configurations with EER similar to the previous ones but less features.
Moreover, we varied $\alpha$ in the interval $[0,1]$ and the number of eigenfaces $k$ from $1$ to $10$.
  \autoref{Tab_Fusion_totale} summarizes our results, showing the EER for each $N$ and $k$.

As shown in \autoref{Tab_Fusion_totale},
the same accuracy of the $6400$ stand alone iris protocol (2.08\%) can be reached with many different multi-biometric configurations, e.g. by using $N=3600$ and $k=7$ or even by using only $1600$ iris features and $k=1$. 
For the tests in the encrypted domain, between the two configurations with the same accuracy, we chose  the last one, since it has lower bandwidth and computational complexity 
(see \autoref{tab_complexity} and \autoref{tab_trans}). 
We can also notice that keeping $N=1600$, but using 2 features for face 
representation, we can lower both accuracy and complexity. 
Generally, by using two eigenfaces, the best possibile accuracy is provided with  $5760$ iris 
features, however the same performance are obtained also by the combination of $3600$ iris features and 
three face features.
For this reasons, we tested several 
configurations in the encrypted domain, as summarised in \autoref{Tab_Fusion}.

 \section{Complexity of the \spdz~ protocol}
 \label{sec_test_enc}
 
 We evaluated the computational complexity of our implementation of the \spdz$\,$ protocol (Section \ref{sec_implements}), by using the parameters chosen in the previous section (see \autoref{Tab_noshift} and 
 \autoref{Tab_Fusion}).
Execution times are heavily affected by the number of multiplications. As we said 
in Section \ref{sec_iris_Encrypted}, when possible, we performed a single 
transmission, by packing data. 
To calculate the execution time, reported in \autoref{tab_iris_spdz}, we used the \emph{clock}
 function, measuring the CPU time of the process. 
\begin{table}[ht]
	\centering  
	\renewcommand{\arraystretch}{1.3}
	\caption{Equal Error Rate of iris and multimodal biometric  authentication protocols for different settings;  $\alpha$, $t$ respectively stand for fusion coefficient and threshold.}
	
	\label{Tab_Fusion}
	\begin{tabular}{cccccc}
		{Iris}    &	 \multicolumn{2}{c}{EER}&	Face& \multicolumn{2}{c}{Fusion parameters}\\
		N  &   Iris  (\%) & Fusion (\%)&{$k$}		&{$\alpha$}	&{$t$}\\\hline    
		$1600$ &$3.10$&	$2.01$&1&	$0.80$&	$0.35$\\

		$1600$&$3.10$&	$1.87$&2&	$0.55$&	$0.25$\\
		$3600$&$2.54$&$0.97$&3&$0.55$&$0.25$\\
		$5760$&$2.51$&	$0.98$& 2&	$0.80$&	$0.35$\\
		$6400$&	$2.08$&$1.15$&2&	$0.80$&	$0.35$
	\end{tabular}
\end{table}
%
%

%
   
   {In \acronym~the number of transmission rounds depends only on the bitlength $\ell$ of the prime number $p$  and not on the feature configuration, as it can be seen from \autoref{tab_complexity}.
   On the contrary, the amount of data transmitted by each party also depends on the number of features used in the protocol. 
   In fact, the iris authentication protocol has a bandwidth of  $(6N+2\ell+2)\cdot\ell$ bits, 
    while the multimodal protocol bandwidth is  $\ell\cdot (6N+k+2\ell+12)$ bits. 
   Since the complexity of the iris protocol is much higher than that of the face-based authentication protocol, the overhead introduced by the multimodal biometric authentication is of few bytes, as it can be seen from \autoref{tab_trans}. 
  For this reason, the communication complexity remains almost constant switching 
  from the iris to the multimodal protocol.
   } 
\begin{table}
     \renewcommand{\arraystretch}{1.3}
 \caption{Communication complexity for the iris and multimodal protocols. 
 It is important to notice that adding few eigenfaces incerases the bandwidth by a few bytes only.}\label{tab_trans}
  \centering
	\begin{tabular}{ccccc}
{Iris}    &	 \multicolumn{3}{c}{bandwidth (KB)} & Face \\
      $N$ & iris & multimodal & overhead &$k$\\\hline
   
  1600& 53.24   &53.30 &0.06&1\\
  3600&119.16 &119.23 &0.07&3\\
  5760&190.35 & 190.42 &0.07&2\\
  6400&211.44  &211.51 &0.07&2
       
     \end{tabular}
   \end{table}
   
The main goal of our work was exploiting multimodality to reduce complexity while maintaining the same accuracy of the iris-based protocol.
  Moreover, our analysis shows that the multimodal protocol can also be used to lower the EER without a significant loss in terms of complexity. In the following, we discuss both cases. 

\paragraph{Improved efficiency}
 {The running time of the stand alone iris authentication protocol ranges from $0.03$s 
for $1600$ bits, up to $0.12$s for a $6400$ bit-long template
in the malicious setting  (see \autoref{tab_iris_spdz}), while Luo et al. protocol  \cite{luo2012efficient} with masks needs $2.5$s for $9600$ bits and $0.56$s for $2048$ bits in the semi-honest setting.
Moreover from \autoref{Tab_Fusion} and \autoref{tab_iris_spdz}, it is evident that \acronym~can provide the same accuracy of the best stand alone iris 
protocol, but with lower execution time and computational complexity. 
As a matter of fact, the 
best EER for the standalone iris protocol is $2.08\%$ for $6400$ features corresponding to $19246$ multiplications (\autoref{tab_complexity}) in $0.12\textrm{s}$, 
while in the fusion configuration for $1600$ iris features and $1$ eigenface feature, we need 
only $8744$ multiplications (see \autoref{tab_complexity}, where we consider squares as multiplications) to obtain an EER equal to $2.01\%$ in about $0.03$ seconds. 
On the contrary, the number of required transmissions increases from $2\ell+7$ to $2\ell+19$ (\autoref{tab_complexity}), 
but it depends only on the bit length of $p$.
}

\paragraph{Improved accuracy}
{As an alternative to improve the computational complexity, the use of two biometries instead of one can be exploited  to achieve a higher accuracy, at the cost of a slight increase of complexity with respect to the iriscode protocol. 
In fact, as shown in \autoref{tab_complexity}, complexity depends heavily on the number of iris features, however by adding two eigenfaces it is possible to decrease the EER rate, while the 
number of multiplications increases only from $3N+\ell+1$ to $3N+\ell+6+k = 3N+\ell+8$ (as usual we consider squaring to be equivalent to multiplication).
More generally, when we move to multimodal authentication, the total CPU time slightly increases with respect to the unimodal iris protocol, but the EER always decreases;
by adding one more eigenface ($k=2$) to the $1600$ iris feature configuration considered above, we can have a better EER ($1.87\%$) with the same time complexity (0.03 seconds). 
For the case of $5760$ bit long iris template, the EER passes from  from $2.1\%$ for the unimodal authentication to $0.98\% $ for the bimodal case with $k=2$ (\autoref{Tab_Fusion}). 
Finally, keeping $0.98\%$ as target accuracy, we highlight that we can reduce $N$ to $3600$ at the cost of an additional feature in the face representation ($k=3$). In this case, computational complexity goes from $36926$ to $19596$ multiplications  and time complexity decreases from $0.109$s to  $0.05$s (\autoref{tab_trans} and \autoref{tab_iris_spdz}). }

%
 
%
%
%
   
\begin{table}[ht]
    \centering
    \renewcommand{\arraystretch}{1.3}
     \caption{Iris protocol time in \spdz$\,$ system.}
   \label{tab_iris_spdz}
   
  \begin{tabular}{cccc}
       Iris  &  \multicolumn{2}{c}{CPU time} & Face  \\
       $N$          &    Iris (s)          & multimodal (s)   &  $k$  \\\hline
   
 {1600} &\multirow{2}{*}{0.029s}  &0.030 &1\\
    1600 &               &0.030 &2\\
  3600&  0.048     &0.049 &3\\
  5760& 0.11s& 0.109 &2\\
  6400 &0.12s&0.120 &2
       
     \end{tabular}
   
 \end{table}


%
  
\section{Conclusions}\label{sec_conclusion}
In this paper, we have proposed \acronym, a multimodal authentication system based on the MPC 
approach \spdz$\,$ \cite{damgaard2013practical,damgaard2012multiparty} secure against a malicious party.
We have shown that by using a multi-modal system it is possible to improve the 
efficiency of the recognition process in terms of number of multiplications and evaluation time, without any loss of accuracy.
In the same way, it is also possible to improve accuracy at the cost of a negligible 
increase of complexity.
As an additional contribution,  we adapted the iris and face authentication protocols to work in the \spdz 
$\,$ setting. A further additional complexity reduction is achieved by resorting to packed transmission 
of encrypted data involved in the secure multiplication protocol. 
{As future work, we plan to extend our approach to even more biometric traits, like fingerprints,  
behavioral biometric and many others. Another interesting  research direction could be to look for different 
algorithms and more efficient fusion rules to merge the match scores. }
We are also interested to test our protocol on mobile devices, in order to measure the complexity of the 
whole protocol, including also multi-biometric acquisition and feature extraction.


\bibliographystyle{IEEEtran}
\bibliography{bibliografia} 

\end{document}